\theoremstyle{plain}
\newtheorem{theorem}{Theorem}[section]
\newtheorem{lemma}[theorem]{Lemma}
\newtheorem{proposition}[theorem]{Proposition}
\theoremstyle{definition}
\numberwithin{equation}{section}
\newcommand{\R}{\mathbb{R}}
\newcommand{\abs}[2][]{#1\lvert #2 #1\rvert}
\newcommand{\FF}{{\mathcal S}}
\newcommand{\ws}{w^{(s)}}
\newcommand{\phis}{\Phi^{(s)}}
\newcommand{\phistar}{\Phi^{(s_\star)}}
\newcommand{\wstar}{w^{(s_\star)}}
\title{Nonexistence of steady waves with negative vorticity}
\author{Evgeniy Lokharu}
\address{Department of Mathematics, Linköping University, SE-581 83 Linköping, Sweden}
\begin{document}
	
\begin{abstract}
	
	We prove that no two-dimensional Stokes and solitary waves exist when the vorticity function is negative and the Bernoulli constant is greater than a certain critical value given explicitly. In particular, we obtain an upper bound $F \lesssim \sqrt{2}$ for the Froude number of solitary waves with a negative constant vorticity, sufficiently large in absolute value.
	
\end{abstract}

\maketitle

\section{Introduction} \label{s:introduction}

We consider the classical water wave problem for two-dimensional steady waves with vorticity on water of finite depth. We neglect effects of surface tension and consider a fluid of constant (unit) density. Thus, in an appropriate coordinate system moving along with the wave, stationary Euler equations are given by
\begin{subequations}\label{eqn:trav}
	\begin{align}
	\label{eqn:u}
	(u-c)u_x + vu_y & = -P_x,   \\
	\label{eqn:v}
	(u-c)v_x + vv_y & = -P_y-g, \\
	\label{eqn:incomp}
	u_x + v_y &= 0, 
	\end{align}
	which holds true in a two-dimensional fluid domain
	\[
	0 < y < \eta(x).
	\]
	Here $(u,v)$ are components of the velocity field, $y = \eta(x)$ is the surface profile, $c$ is the wave speed, $P$ is the pressure and $g$ is the gravitational constant. The corresponding boundary conditions are
	\begin{alignat}{2}
	\label{eqn:kinbot}
	v &= 0&\qquad& \text{on } y=0,\\
	\label{eqn:kintop}
	v &= (u-c)\eta_x && \text{on } y=\eta,\\
	\label{eqn:dyn}
	P &= P_{\mathrm{atm}} && \text{on } y=\eta.
	\end{alignat}
\end{subequations}
It is often assumed in the literature that the flow is irrotational, that is $v_x - u_y$ is zero everywhere in the fluid domain. Under this assumption components of the velocity field are harmonic functions, which allows to apply methods of complex analysis. Being a convenient simplification it forbids modeling of non-uniform currents, commonly occurring in nature. In the present paper we will consider rotational flows, where the vorticity function is defined by
\begin{equation} \label{vort}
\omega = v_x - u_y.
\end{equation}
Throughout the paper we assume that the flow is unidirectional, that is
\begin{equation} \label{uni}
u-c > 0 
\end{equation}
everywhere in the fluid. This forbids the presence of stagnation points an gives an advantage of using the partial hodograph transform.  

In the two-dimensional setup relation \eqref{eqn:incomp} allows to reformulate the problem in terms of a stream function $\psi$, defined implicitly by relations
\[
\psi_y = u - c, \ \ \psi_x = - v.
\]
This determines $\psi$ up to an additive constant, while relations \eqref{eqn:kinbot},\eqref{eqn:kinbot} force $\psi$ to be constant along the boundaries. Thus, by subtracting a suitable constant, we can always assume that
\[
\psi = m, \ \ y = \eta; \ \ \psi = 0, \ \ y = 0.
\]
Here $m$ is the mass flux, defined by
\[
m = \int_0^\eta (u-c) dy.
\]
In what follows we will use non-dimensional variables proposed by Keady \& Norbury \cite{KeadyNorbury78}, where lengths and velocities are scaled by $(m^2/g)^{1/3}$ and $(mg)^{1/3}$ respectively; in new units $m=1$ and $g=1$. For simplicity we keep the same notations for $\eta$ and $\psi$.

Taking the curl of Euler equations \eqref{eqn:u}-\eqref{eqn:incomp} one checks that the vorticity function $\omega$ defined by \eqref{vort} is constant along paths tangent everywhere to the relative velocity field $(u-c,v)$; see \cite{Constantin11b} for more details. Having the same property by the definition, stream function $\psi$ is strictly monotone by \eqref{uni} on every vertical interval inside the fluid region. These observations together show that $\omega$ depends only on values of the stream function, that is
\[
\omega = \omega(\psi).
\]
This property and Bernoulli's law allow to express the pressure $P$ as
\begin{align}
\label{eqn:bernoulli}
P-P_\mathrm{atm} + \frac 12\abs{\nabla\psi}^2 + y  + \Omega(\psi) - \Omega(1) = const,
\end{align}
where 
\begin{align*}
\Omega(\psi) = \int_0^\psi \omega(p)\,dp
\end{align*}
is a primitive of the vorticity function $\omega(\psi)$. Thus, we can eliminate the pressure from equations and obtain the following problem:
\begin{subequations}\label{eqn:stream}
	\begin{alignat}{2}
	\label{eqn:stream:semilinear}
	\Delta\psi+\omega(\psi)&=0 &\qquad& \text{for } 0 < y < \eta,\\
	\label{eqn:stream:dyn}
	\tfrac 12\abs{\nabla\psi}^2 +  y  &= r &\quad& \text{on }y=\eta,\\
	\label{eqn:stream:kintop} 
	\psi  &= 1 &\quad& \text{on }y=\eta,\\
	\label{eqn:stream:kinbot} 
	\psi  &= 0 &\quad& \text{on }y=0.
	\end{alignat}
\end{subequations}
Here $r>0$ is referred to as Bernoulli's constant. Another constant of motion known as the flow force is given by
\begin{equation} \label{flowforce}
\FF = \int_0^{\eta}(\psi_y^2 - \psi_x^2 - y + \Omega(1) - \Omega(\psi) + r )\, dy.
\end{equation}
This constant is important in several ways; for instance, it plays the role of the Hamiltonian in spatial dynamics; see \cite{Baesens1992}. The flow force constant is also involved in a classification of steady motions; see \cite{Benjamin95}.

\subsection{Stream solutions} Laminar flows or shear currents, for which the vertical component $v$ of the velocity field is zero play an important role in the theory of steady waves. Let us recall some basic facts about stream solutions $\psi = U(y)$ and $\eta = d$, describing shear currents. It is convenient to parameterize the latter solutions by the relative speed at the bottom. Thus, we put $U_y(0) = s$ and find that $U = U(y;s)$ is subject to
\begin{equation} \label{eqn:laminar}
U'' + \omega(U) = 0, \ \ \ 0 < y < d; \ \ U(0) = 0, \ \ U(d) = 1.
\end{equation}
Our assumption \eqref{uni} implies $U' > 0$ on $[0; d]$, which puts a natural constraint on $s$. Indeed, multiplying the first equation in \eqref{eqn:laminar} by $U'$ and integrating over $[0; y]$, we find
\[
U'^2 = s^2 - 2\Omega(U).
\]
This shows that the expression $s^2 - 2 \Omega(p)$ is positive for all $p \in [0; 1]$, which requires
\[
s > s_0 = \sqrt{\max_{p \in [0,1]}2\Omega(p)}.
\]
On the other hand, every $s > s_0$ gives rise to a monotonically increasing function $U(y; s)$ solving \eqref{eqn:laminar} for some unique $d = d(s)$, given explicitly by
\[
d(s) = \int_0^1 \frac{1}{\sqrt{s^2 - 2\Omega(p)}}.
\]
This formula shows that $d(s)$ monotonically decreases to zero with respect to $s$ and takes values between zero and 
\[
d_0 = \lim_{s \to s_0+} d(s).
\]
The latter limit can be finite or not. For instance, when $\omega = 0$ we find $s_0 = 0$ and $d_0 = +\infty$. On the other hand, when $\omega = -b$ for some positive constant $b \neq 0$, then $s_0 = 0$ but $d_0 < + \infty$. We note that our main theorem is concerned with the case $d_0 < + \infty$.

Every stream solution $U(y;s)$ determines the Bernoulli constant $R(s)$, which can be found from the relation \eqref{eqn:stream:kintop}. This constant can be computed explicitly as
\[
R(s) = \tfrac12 s^2 - \Omega(1) + d(s).
\]
As a function of $s$ it decreases from $R_0$ to $R_c$ when $s$ changes from $s_0$ to $s_c$ and increases to infinity for $s>s_c$. Here the critical value $s_c$ is determined by the relation
\[
\int_0^1 \frac{1}{(s^2 - 2 \Omega(p))^{3/2}} dp = 1.
\]
The constants $R_0$ and $R_c$ are of special importance for the theory. For example, it is proved in \cite{Kozlov2015} that $r > R_c$ for any steady motion other than a laminar flow. In the present paper we will consider the water wave problem \eqref{eqn:stream} for $r > R_0$, provided $R_0 < +\infty$. The latter is true, for instance, for a negative constant vorticity.

For any $r \in (R_c, R_0]$ there are exactly two solutions $s_-(r) < s_+(r)$ to the equation
\[
R(s) = r,
\]
while for $r > R_0$ one finds only one solution $s=s_+(r)$. The laminar flow corresponding to $s_-(r)$ is called subcritical and it's depth is denoted by $d_-(r) = d(s_-(r))$. The other flow, with $s = s_+(r)$ is called supercritical and it's depth is $d_+(r) = d(s_+(r))$. According to the definition, we have
\[
d_+(r) < d_-(r).
\]
The flow force constants corresponding to flows with $s = s_\pm$ are denoted by $S_\pm(r)$.

It was recently proved in \cite{KozLokhWheeler2020} that all solitary waves are supported by supercritical depths $d_+(r)$ and the corresponding flow force constant equals to $\FF_+(r)$; here $r$ is the Bernoulli constant of a solitary wave.
\subsection{Formulations of main results.}

Just as in \cite{Kozlov2015} we split the set of all vorticity functions into three classes as follows:

\begin{itemize}
	\item[(i)] $\max_{p \in [0,1]} \Omega(p)$ is attained either at an inner point of $(0,1)$ or at an end-point, where $\omega$ attains zero value;
	\item[(ii)] $\Omega(p) < 0$ for all $p \in (0,1]$ and $\omega(0) \neq 0$;
	\item[(iii)] $\Omega(p) < \Omega(1)$ for all $p \in [0,1)$ (and so $\omega(1) \neq 0$).
\end{itemize}

The first class can be characterized by relations $R_0 = +\infty$ and $d_0 = +\infty$, while $R_0, d_0 < + \infty$ for all vorticity functions that belong to the second and third classes. Our main result states 

\begin{theorem}\label{thm:main} Let $\omega \in C^{\gamma}([0,1])$ satisfies conditions (ii) or (iii). Then there exist no Stokes waves with $r \geq R_0 - \Omega(1)$. Furthermore, there are no solitary waves with $r \geq R_0$.
\end{theorem}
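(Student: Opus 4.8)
The plan is to pass to Dubreil--Jacotin (partial hodograph) coordinates $(x,q)$ with $q=\psi$, which turns the fluid domain into the fixed strip $\R\times(0,1)$ and \eqref{eqn:stream} into a quasilinear elliptic equation for the height $h(x,q)$ of the streamline $\psi=q$; here $h_q=(u-c)^{-1}>0$ by \eqref{uni}, the laminar flows \eqref{eqn:laminar} are the $x$-independent solutions $h=H(q;s)$ with $H(1;s)=d(s)$, and the dynamic condition \eqref{eqn:stream:dyn} becomes a Bernoulli relation on $q=1$. The engine of the argument is a pointwise estimate at a crest. Let $x_{\mathrm{cr}}$ be an abscissa where $\eta$ attains its maximum $\eta_{\mathrm{cr}}$; for symmetric waves $h_x(x_{\mathrm{cr}},\cdot)\equiv 0$ by reflection, and in general $x=x_{\mathrm{cr}}$ is a critical level of every streamline by the known monotonicity of steady waves without stagnation. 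Restricting the transformed equation to the vertical line $x=x_{\mathrm{cr}}$ and using $h_{xx}(x_{\mathrm{cr}},\cdot)\le 0$ (each streamline peaks above the crest), it collapses to the ordinary differential inequality
\[
P'(q)\le-\omega(q)\quad\text{in }(0,1),\qquad P(q):=\tfrac{1}{2\,h_q(x_{\mathrm{cr}},q)^2}.
\]

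\textbf{From the inequality to the bound.} The dynamic condition \eqref{eqn:stream:dyn} at the crest reads $P(1)=r-\eta_{\mathrm{cr}}$, so integrating from $q$ to $1$ gives $P(q)\ge r-\eta_{\mathrm{cr}}+\Omega(1)-\Omega(q)$, hence
\[
h_q(x_{\mathrm{cr}},q)\le\bigl(\sigma^2-2\Omega(q)\bigr)^{-1/2},\qquad \sigma^2:=2\bigl(r-\eta_{\mathrm{cr}}+\Omega(1)\bigr),
\]
whenever the right-hand side is real --- which is automatic under (iii), since otherwise $\eta_{\mathrm{cr}}\ge r$ would contradict $P(1)>0$, while under (ii) the remaining possibility $\eta_{\mathrm{cr}}\ge r+\Omega(1)$ calls for a short separate discussion. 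Integrating the last display in $q$ and using $\eta_{\mathrm{cr}}=\int_0^1h_q(x_{\mathrm{cr}},q)\,dq$ gives $\eta_{\mathrm{cr}}\le d(\sigma)$, whence $r=\tfrac12\sigma^2-\Omega(1)+\eta_{\mathrm{cr}}\le R(\sigma)$. For a solitary wave we additionally know from \cite{KozLokhWheeler2020} that it is supercritical, so $\eta_{\mathrm{cr}}>d_+(r)=d(s_+(r))$, forcing $\sigma<s_+(r)$; since $R$ decreases on $(s_0,s_c)$, increases on $(s_c,\infty)$ and $R(s_+(r))=r$, the case $\sigma\in(s_c,s_+(r))$ would give the impossible $R(\sigma)<r$, so $\sigma\le s_c$ and $r\le R(\sigma)\le R_0$. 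Strictness is automatic under (iii) (there $\sigma>s_0$ strictly, so $R(\sigma)<R_0$) and, under (ii), follows because equality throughout forces $h_{xx}(x_{\mathrm{cr}},\cdot)\equiv 0$, hence $h_x\equiv 0$ by unique continuation for the linearised equation, i.e. a laminar flow --- a contradiction. This proves the solitary assertion.

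\textbf{Stokes waves and the main obstacle.} For a Stokes wave I would run the same computation at a trough $x_{\mathrm{tr}}$, where $h_{xx}(x_{\mathrm{tr}},\cdot)\ge 0$ reverses the sign and yields $\eta_{\mathrm{tr}}\ge d(\tau)$ and $r\ge R(\tau)$ with $\tau^2:=2(r-\eta_{\mathrm{tr}}+\Omega(1))>\sigma^2$; since $\sigma<\tau$ cannot both sit on the increasing branch of $R$, this together with $r\le R(\sigma)$ forces $\sigma\le s_c$, hence $r\le R(\sigma)\le R_0$. As $R_0<R_0-\Omega(1)$ under (ii), this already settles case (ii). Case (iii) is where I expect the real difficulty: there $R_0-\Omega(1)<R_0$, so the estimate $r\le R_0$ must be sharpened to $r<R_0-\Omega(1)$. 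This amounts to pushing $\sigma$ beyond the level $\sigma_1$ with $R(\sigma_1)=R_0-\Omega(1)$, equivalently to a quantitative lower bound on $P(1)=r-\eta_{\mathrm{cr}}$ --- i.e. on the relative surface speed at the crest --- stronger than the trivial $P(1)>0$. The natural device is the invariance of the flow force $\FF$ from \eqref{flowforce}: equating its value computed at the crest and at the trough, inserting the profile estimates above, and using the explicit behaviour of $R(s)$, $d(s)$ and $\FF_{\pm}(r)$ near $s_0$ (recall $\tfrac12 s_0^2=\Omega(1)$ and $R_0=d_0$ under (iii)) should yield the missing inequality. Making that step precise, and rigorously justifying the reduction to the differential inequality for possibly non-symmetric waves, are the two points where I would expect most of the work to concentrate.
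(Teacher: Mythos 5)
Your route (the pointwise Dubreil--Jacotin comparison at the crest, $P'\le-\omega$ with $P=\tfrac12 h_q^{-2}$, integrated against the laminar family) is entirely different from the paper's, which never touches the crest ODE: the paper works with the flow force flux function $\Phi^{(s)}$ of \eqref{fff}, shows it solves a homogeneous elliptic problem, and compares the flow force $\FF$ with $\sigma(s;r)$ via Proposition \ref{p:fff} and a level-set/contour argument between a trough and a crest. The computations you do perform are correct as far as they go ($P'\le-\omega$ at a crest, $P'\ge-\omega$ at a trough, hence $r\le R(\sigma)$, $r\ge R(\tau)$ with $\tau>\sigma$, and the branch analysis of $R$). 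But there is a genuine gap, and it sits exactly where you wave it away as ``a short separate discussion'': under condition (ii) you need $\sigma^2=2(r-\eta_{\mathrm{cr}}+\Omega(1))>0$, i.e.\ the relative speed at the crest must exceed $\sqrt{2|\Omega(1)|}$, for $d(\sigma)$ and $R(\sigma)$ to be defined and for the crest estimate to close. Nothing excludes $\eta_{\mathrm{cr}}\ge r+\Omega(1)$: near-stagnation configurations have $r-\eta_{\mathrm{cr}}\to0$ and fall squarely into this case, and in it your bound $h_q(x_{\mathrm{cr}},q)\le(\sigma^2-2\Omega(q))^{-1/2}$ degenerates near the bottom, so neither $\eta_{\mathrm{cr}}\le d(\sigma)$ nor $r\le R(\sigma)$ survives. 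This is not a technicality but the crux: were the case removable cheaply, your trough--crest argument would prove $r\le R_0$ for unidirectional Stokes waves under (ii), which is strictly stronger than the theorem's $r<R_0-\Omega(1)$ and which the author explicitly flags as open (``it is not clear if one can omit completely the term $-\Omega(1)$''). The very presence of the $-\Omega(1)$ correction in the statement is the signature of this unhandled regime; the paper circumvents it by deriving a lower bound $\eta(q_c)>d_0+2(r-R_0)$ on the crest height from the flow-force machinery and only then invoking $\eta(q_c)<r$.

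Two further points. First, your solitary-wave argument has the same hole under (ii) ($\sigma$ need not be real there either; supercriticality $\eta_{\mathrm{cr}}>d_+(r)$ gives no contradiction with $\eta_{\mathrm{cr}}\ge r+\Omega(1)$ when $r\ge R_0$), so the first assertion of the theorem is also not established by your proposal. Second, for Stokes waves under (iii) you openly present only a programme (matching flow-force values at crest and trough) rather than a proof; the strengthening from $r\le R_0$ to $r<R_0-\Omega(1)$ in that case is not carried out. The reduction of the quasilinear equation to the ODE inequality on the crest/trough lines for possibly non-symmetric waves is a lesser issue (it follows from the standard monotonicity properties the paper itself lists as (a)--(d)), but it too is asserted rather than argued.
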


A part of the statement, when $\omega$ is subject to (iii) was proved in \cite{Kozlov2015}, where it was shown that no steady waves exist for $r \geq R_0$ (under condition (iii)). We note that there is no analogues statement for irrotational waves. A typical example of a vorticity function satisfying condition (ii) is a negative constant vorticity $\omega(p) = - b$, $b>0$. It is known (see \cite{Wahlen09}) that vorticity distributions of this type give rise to Stokes waves over flows with internal stagnation points, that exist for all Bernoulli constants $r > R_0$. Furthermore, a recent study \cite{Kozlov2020} shows that there exist continuous families of such Stokes waves that approach a solitary wave in the long wavelength limit. The latter solitary wave has $r > R_0$ and rides a supercritical unidirectional flow (corresponding to one of stream solutions $U(y;s)$ with $s>s_c$) but has a near-bottom stagnation point on a vertical line passing through the crest. Thus, even so there are no unidirectional waves for $r > R_0$, there exist Stokes and solitary waves with $r > R_0$ violating assumption \eqref{uni}. These considerations show that the statement of Theorem \ref{thm:main} is sharp in a certain sense. On the other hand, inequality $r \geq R_0 - \Omega(1)$ is not sharp and probably can be improved further. However it is not clear if one can omit completely the term $-\Omega(1)$ from the bound on the Bernoulli constant.

Inequality $r \leq R_0$ for solitary waves puts a natural upper bound for the Froude number 
\[
F^2(s) = \left(\int_0^d (U_y(y;s))^{-2} dy \right)^{-1}.
\]
It is well known that for irrotational solitary waves $F < \sqrt{2}$; see \cite{starr}, \cite{kp}. Furthermore, the bound $F<2$ for rotational waves with a negative vorticity was obtained in \cite{Wheeler2015}. For small negative vorticity distributions inequality $1<F(s)<2$ is stronger than $R_c < R(s) < R_0$. However, already for $\omega(p) = -1$ the inequality $R(s) < R_0$ becomes stronger. For $\omega(p) = - b$ with a large $b>0$ we find that inequality $R_c < R(s) < R_0$ is equivalent to $1<F(s)\lesssim \sqrt{2}$, which is significantly better than $F<2$.
\section{Preliminaries}

\subsection{Reformulation of the problem}

Under assumption \eqref{uni} we can apply the partial hodograph transform introduced by Dubreil-Jacotin \cite{DubreilJacotin34}. More precisely, we present new independent variables
\[
q = x, \ \ p = \psi(x,y),
\]
while new unknown function $h(q,p)$ (height function) is defined from the identity
\[
h(q,p) = y.
\]
Note that it is related to the stream function $\psi$ through the formulas
\begin{equation} \label{height:stream}
\psi_x = - \frac{h_q}{h_p}, \ \ \psi_y = \frac{1}{h_p},
\end{equation}
where $h_p > 0$ throughout the fluid domain by \eqref{uni}. An advantage of using new variables is in that instead of two unknown functions $\eta(x)$ and $\psi(x,y)$ with an unknown domain of definition, we have one function $h(q,p)$ defined in a fixed strip $S = \R \times [0,1]$. An equivalent problem for $h(q,p)$ is given by

\begin{subequations}\label{height}
	\begin{alignat}{2}
	\label{height:main}
	\left( \frac{1+h_q^2}{2h_p^2} + \Omega \right)_p - \left(\frac{h_q}{h_p}\right)_q &=0 &\qquad& \text{in } S,\\
	\label{height:top}
	\frac{1+h_q^2}{2h_p^2} +  h  &= r &\quad& \text{on }p=1,\\
	\label{height:bot} 
	h  &= 0 &\quad& \text{on }p=0.
	\end{alignat}
\end{subequations}

The wave profile $\eta$ becomes the boundary value of $h$ on $p = 1$:
\[
h(q,1) = \eta(q), \ \ q \in \R.
\]
Using \eqref{height:stream} and Bernoulli's law \eqref{eqn:bernoulli} we recalculate the flow force constant $\FF$ defined in \eqref{flowforce} as
\begin{equation}\label{height:ff}
\FF = \int_0^1 \left( \frac{1-h_q^2}{h_p^2} - h - \Omega + \Omega(1) + r \right) h_p \, dp.
\end{equation}
Laminar flows defined by stream functions $U(y; s)$ correspond to height functions $h = H(p; s)$ that are independent of horizontal variable $q$. The corresponding equations are
\[
\left(\frac{1}{2H_p^2} + \Omega\right) = 0, \ \ H(0) = 0, \ \ H(1) = d(s), \ \ \frac{1}{2 H_p^2(1)} + H(1) = R(s).
\]
Solving equations for $H(p; s)$ explicitly, we find
\[
H(p;s) = \int_0^p \frac{1}{\sqrt{s^2 -2\Omega(\tau)}} \, d\tau.
\]
Given a height function $h(q,p)$ and a stream solution $H(p;s)$, we define
\begin{equation}\label{ws}
\ws(q,p) = h(q,p) - H(p;s).
\end{equation}
This notation will be frequently used in what follows. In order to derive an equation for $\ws$ we first write \eqref{height:main} in a non-divergence form as
\[
\frac{1+h_q^2}{h_p^2} h_{pp} - 2\frac{h_q}{h_p} h_{qp} + h_{qq} - \omega(p) h_p = 0.
\]
Now using our ansats \eqref{ws}, we find
\begin{equation}\label{ws:main}
\frac{1+h_q^2}{h_p^2} \ws_{pp} - 2\frac{h_q}{h_p} \ws_{qp} + \ws_{qq} - \omega(p) \ws_p + \frac{(\ws_q)^2 H_{pp}}{h_p^2} - \frac{\ws_p (h_p + H_p) H_{pp}}{h_p^2 H_p^2} = 0.
\end{equation}
Thus, $\ws$ solves a homogeneous elliptic equation in $S$ and is subject to a maximum principle; see \cite{Vitolo2007} for an elliptic maximum principle in unbounded domains. The boundary conditions for $\ws$ can be obtained directly from \eqref{height:top} and \eqref{height:bot} by inserting
\eqref{ws} and using the corresponding equations for $H$. This gives
\begin{subequations}\label{ws:boundary}
\begin{alignat}{2}
\frac{(\ws_q)^2}{2h_p^2} - \frac{\ws_p (h_p + H_p)}{2h_p^2 H_p^2} + \ws &=r- R(s) &\qquad& \text{for } p=1,\label{ws:top} \\ 
\ws &= 0&\qquad& \text{for } p=0. \label{ws:bot}
\end{alignat}
\end{subequations}
Concerning the regularity, we will always assume that $\omega \in  C^\gamma
([0; 1])$ and $h \in  C^{2,\gamma}
(\overline{S})$, where $C^{2,\gamma}(\overline{S})$
 is the usual subspace of $C^2(\overline{S})$ (all partial derivatives up to the second order are bounded
and continuous in $\overline{S}$) of functions with H\"older continuous second-order derivatives with a finite H\"older norm, calculated over the whole strip $S$. The exponent 
$\gamma \in (0; 1)$ will be fixed throughout the paper. The Bernoulli constant $r$ will remain unchanged and we will often omit it from notations, such as $s_\pm$ or $\FF_\pm$. Furthermore, in many formulas such as \eqref{ws:top},
we will omit the dependence on $s$ in the notation for $H$, while the right choice of $s$ will be clear from the context.

\subsection{Auxiliary functions $\sigma$ and $\kappa$}

For a given $r > R_c$ and $s > s_0$ we define
\begin{equation} \label{sigma}
	\sigma(s;r) = \int_0^1 \left( \frac{1}{2H_p^2(p;s)} - H(p;s) - \Omega(p) + \Omega(1) + r \right) H_p(p;s) \,dp.
\end{equation}
This expression coincides with the flow force constant for $H(p; s)$, but with the Bernoulli constant $R(s)$ replaced by $r$. We also note that
\[
\sigma(s_\pm(r);r) = \FF_\pm(r).
\]
The key property of $\sigma(s; r)$ is stated below.

\begin{lemma} \label{lemma:sigma} For a given $r \geq R_0$ the function $s \mapsto \sigma(s;r)$ decreases for $s \in (s_0,s_+(r))$ and increases to infinity for $s \in (s_+(r),+\infty)$.
\end{lemma}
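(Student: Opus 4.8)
The plan is to establish the explicit identity
\[
\partial_s\sigma(s;r)=\bigl(r-R(s)\bigr)\,d'(s),\qquad s>s_0,
\]
and then to read the monotonicity off from it. Since $d(s)=\int_0^1 H_p(p;s)\,dp$, we have $d'(s)=-s\int_0^1 H_p^3(p;s)\,dp<0$ for every $s>s_0$, so the sign of $\partial_s\sigma(\cdot;r)$ coincides with the sign of $R(s)-r$; combined with the behaviour of $R$ recalled above, this gives the statement. Throughout I use the explicit formulas $H_p(p;s)=(s^2-2\Omega(p))^{-1/2}$ --- equivalently $\tfrac{1}{2H_p^2}+\Omega(p)=\tfrac{s^2}{2}$ --- and $R(s)=\tfrac{s^2}{2}-\Omega(1)+d(s)$.

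To derive the identity I differentiate $\sigma(s;r)=\int_0^1\bigl(\tfrac{1}{2H_p^2}-H-\Omega+\Omega(1)+r\bigr)H_p\,dp$ under the integral sign; this is legitimate since the integrand and its $s$-derivative are continuous on $[0,1]\times K$ for every compact $K\subset(s_0,\infty)$. From $\tfrac{1}{2H_p^2}=\tfrac{s^2}{2}-\Omega$ one gets $\partial_s H_p=-sH_p^3$, $\partial_s H(p;s)=-s\int_0^p H_p^3\,d\tau$, and $\partial_s\bigl(\tfrac{1}{2H_p^2}-H-\Omega+\Omega(1)+r\bigr)=s\bigl(1+\int_0^p H_p^3\,d\tau\bigr)$, so $\partial_s\sigma(s;r)$ is the sum of $s\int_0^1\bigl(1+\int_0^p H_p^3\,d\tau\bigr)H_p\,dp$ and $-s\int_0^1\bigl(\tfrac{1}{2H_p^2}-H-\Omega+\Omega(1)+r\bigr)H_p^3\,dp$. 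In the first term I integrate $\int_0^1\bigl(\int_0^p H_p^3\,d\tau\bigr)H_p\,dp$ by parts, antidifferentiating $H_p$ to $H$, turning it into $d(s)\int_0^1 H_p^3\,dp-\int_0^1 HH_p^3\,dp$; the $\int_0^1 HH_p^3$ piece then cancels against the $H$ in the second term, leaving
\[
\partial_s\sigma(s;r)=s\,d(s)\Bigl(1+\int_0^1 H_p^3\,dp\Bigr)-s\int_0^1\Bigl(\tfrac{1}{2H_p^2}-\Omega+\Omega(1)+r\Bigr)H_p^3\,dp.
\]
Now $\tfrac{1}{2H_p^2}H_p^3=\tfrac12 H_p$ and, from $\tfrac{1}{2H_p^2}+\Omega=\tfrac{s^2}{2}$, also $\Omega H_p^3=\tfrac{s^2}{2}H_p^3-\tfrac12 H_p$; substituting these, the last integral collapses to $d(s)+\bigl(\Omega(1)+r-\tfrac{s^2}{2}\bigr)\int_0^1 H_p^3\,dp$, and a final use of $R(s)=\tfrac{s^2}{2}-\Omega(1)+d(s)$ gives $\partial_s\sigma(s;r)=\bigl(R(s)-r\bigr)s\int_0^1 H_p^3\,dp=\bigl(r-R(s)\bigr)d'(s)$. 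This bookkeeping is the one genuinely computational step; it is equivalent to the Hamiltonian-type relation $\tfrac{d}{ds}\bigl[\sigma(s;R(s))\bigr]=d(s)R'(s)$ for the laminar flow $H(\cdot;s)$.

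With the identity in hand, fix $r\geq R_0$ and recall $d'<0$ on $(s_0,\infty)$. On $(s_0,s_c)$ the function $R$ is strictly decreasing with supremum $R_0$, so $R(s)<R_0\leq r$; at $s=s_c$ one has $R(s_c)=R_c<R_0\leq r$; and on $(s_c,s_+(r))$ the function $R$ increases strictly from $R_c$ up to $R(s_+(r))=r$, so $R(s)<r$ there too. Hence $r-R(s)>0$ on all of $(s_0,s_+(r))$ and $\partial_s\sigma(s;r)=(r-R(s))d'(s)<0$ there, i.e. $\sigma(\cdot;r)$ decreases on $(s_0,s_+(r))$. For $s>s_+(r)$, $R$ increases strictly, so $R(s)>R(s_+(r))=r$, giving $r-R(s)<0$ and $\partial_s\sigma(s;r)>0$: $\sigma(\cdot;r)$ increases on $(s_+(r),+\infty)$. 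Finally $\sigma(s;r)\to+\infty$ as $s\to+\infty$: writing $\sigma(s;r)=\int_0^1\tfrac{dp}{2H_p}+\int_0^1\bigl(\Omega(1)+r-H-\Omega\bigr)H_p\,dp$, the first integral equals $\tfrac12\int_0^1\sqrt{s^2-2\Omega(p)}\,dp\to+\infty$ while the second stays bounded (indeed $\to 0$, since $\int_0^1 H_p\,dp=d(s)\to0$). The main obstacle of the proof is precisely the algebraic simplification in the middle paragraph; everything else is routine.
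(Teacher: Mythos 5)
Your proof is correct and follows essentially the same route as the paper: differentiating $\sigma$ under the integral sign and simplifying to the identity $\partial_s\sigma(s;r)=-s\,(r-R(s))\int_0^1H_p^3\,dp=(r-R(s))\,d'(s)$, then reading the sign off from the monotonicity of $R$. The only addition is that you explicitly verify $\sigma(s;r)\to+\infty$ as $s\to+\infty$, which the paper's proof leaves implicit.
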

\begin{proof}
Because
\[
H_p(p;s) = \frac{1}{\sqrt{s^2 - 2 \Omega(p)}}, \ \ \partial_s H_p(p;s) = -s H_p^3(p;s),
\]
we can compute the derivative
\[
\begin{split}
\sigma_s(s;r) & = \int_0^1 \left( \frac{1}{2H_p^2(p;s)} - H(p;s) - \Omega(p) + \Omega(1) + r \right) \partial_s H_p(p;s) \,dp  \\
& +  \int_0^1 \left( -\frac{\partial_s H_p(p;s)}{H_p^3(p;s)} - \partial_s H(p;s) \right) H_p(p;s) \,dp \\
& = \int_0^1 \left( -\frac{1}{2H_p^2(p;s)} - \Omega(p) + \Omega(1) + r \right) \partial_s H_p(p;s) \,dp - d(s)d'(s) \\
& = \int_0^1 \left( -\tfrac12 s^2 + \Omega(1) + r \right) \partial_s H_p(p;s) \,dp - d(s) \int_0^1 \partial_s H_p(p;s) \\
& = -s (r-R(s)) \int_0^1 H_p^3(p;s) \, dp.
\end{split}
\]
Finally, because $R(s) < r$ for $s_0 < s < s_+(r)$ and $R(s) > r$ for $s> s_+(r)$ we obtain the statement of the lemma.
\end{proof}

Our function $\sigma(s;r)$ and it's role is similar to the function $\sigma(h)$ introduced by Keady and Norbury in \cite{Keady1975}. The main purpose of the latter is to be used for a comparison with the flow force constant $\FF$.  

The following function will be also involved in our analysis.
\begin{equation} \label{kappa}
	\kappa(s;r) = 2 (\FF - \sigma(s;r)) - (r- R(s))^2.
\end{equation}
A direct computation gives
\[
\begin{split}
\partial_s \kappa(s;r) & = - 2 \partial_s \sigma(s;r) + 2 (r - R(s)) R'(s) \\
& = 2 s (r-R(s)) \int_0^1 H_p^3(p;s) \, dp + 2 (r - R(s)) (s + d'(s)) \\
& = 2 s (r - R(s)).
\end{split}
\]
Thus, we obtain
\begin{lemma} \label{lemma:kappa} For a given $r \geq R_0$ the function $s \mapsto \kappa(s;r)$ increases for $s \in (s_0,s_+(r))$ and decreases to minus infinity for $s \in (s_+(r),+\infty)$.
\end{lemma}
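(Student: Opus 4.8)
The plan is to extract the whole statement from the derivative identity $\partial_s\kappa(s;r)=2s(r-R(s))$ already recorded just above, in the same spirit as the proof of Lemma~\ref{lemma:sigma}. First I would observe that on the range under consideration one has $s>s_0\ge 0$, so the factor $2s$ is strictly positive and the sign of $\partial_s\kappa$ is exactly that of $r-R(s)$. As used in the proof of Lemma~\ref{lemma:sigma}, for $r\ge R_0$ we have $R(s)<r$ on $(s_0,s_+(r))$ and $R(s)>r$ on $(s_+(r),+\infty)$; hence $\partial_s\kappa>0$ on the first interval and $\partial_s\kappa<0$ on the second, which is precisely the claimed monotonicity.

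It then remains to check that $\kappa(s;r)\to-\infty$ as $s\to+\infty$. Here I would combine two facts: by Lemma~\ref{lemma:sigma} the term $\sigma(s;r)$ increases to $+\infty$, and since $R(s)=\tfrac12 s^2-\Omega(1)+d(s)$ with $d(s)\to 0$, also $R(s)\to+\infty$ and therefore $(r-R(s))^2\to+\infty$. As $\FF$ is a fixed constant, the definition $\kappa=2(\FF-\sigma(s;r))-(r-R(s))^2$ exhibits $\kappa$ as a constant minus two nonnegative quantities each diverging to $+\infty$, so $\kappa\to-\infty$. Equivalently, one may integrate $\partial_s\kappa=2s(r-R(s))$ directly: since $R(s)\sim\tfrac12 s^2$, the integrand behaves like $-s^3$ and $\kappa$ like $-\tfrac14 s^4$ for large $s$.

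There is essentially no real obstacle beyond what the excerpt already supplies: the entire content of the lemma sits in the derivative formula $\partial_s\kappa=2s(r-R(s))$, whose verification is the short computation displayed before the statement (using $R'(s)=s+d'(s)$ and the expression for $\partial_s\sigma$ from Lemma~\ref{lemma:sigma}). The only point requiring an argument past sign-chasing is the divergence at $s=+\infty$, and this is immediate once the blow-up of $\sigma$ (Lemma~\ref{lemma:sigma}) and of $R$ is invoked.
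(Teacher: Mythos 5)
Your proposal is correct and follows essentially the same route as the paper, which derives the identity $\partial_s\kappa(s;r)=2s(r-R(s))$ immediately before the lemma and reads off the monotonicity from the sign of $r-R(s)$ on the two intervals, exactly as you do. Your explicit verification that $\kappa\to-\infty$ (via the blow-up of $\sigma(s;r)$ and of $(r-R(s))^2$) is a detail the paper leaves implicit, and it is sound.
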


Properties of functions $\sigma$ and $\kappa$ will used in what follows.

\subsection{Flow force flux functions}

Our aim is to extract some information by comparing the flow force constant $\FF$ (of a given solution with the Bernoulli constant $r \geq R_0$) to $\sigma(s;r)$ for different values of $s > s_0$. For this purpose we first compute the difference
\[
\begin{split}
\FF - \sigma(s;r) & = \int_0^1 \left( \frac{1-(\ws_q)^2}{2h_p^2} - \ws - \frac{1}{2 H_p^2} \right) H_p \, dp \\
& + \int_0^1 \left( \frac{1-(\ws_q)^2}{2h_p^2} - h - \Omega + \Omega(1) + r \right) \ws_p \, dp \\
& = \int_0^1 \left( \frac{(\ws_p)^2}{2h_p H_p^2} - \frac{(\ws_q)^2}{2h_p} + \ws H_p \right) \, dp \\
& + \int_0^1 \left( -\frac{1}{2H_p^2} - \ws - H - \Omega + \Omega(1) + r  \right) \ws_p \, dp.
\end{split}
\]
Now using the identity
\[
- \Omega(p) + \Omega(1) + R(s) = \frac{1}{2H_p^2} + H(1)
\]
and integrating first-order terms, we conclude that
\[
2(\FF - \sigma(s;r)) = 2 (r - R(s)) \ws(q,1) - (\ws(q,1))^2 + \int_0^1 \left( \frac{(\ws_p)^2}{h_p H_p^2} - \frac{(\ws_q)^2}{h_p} \right) \, dp.
\]
Let us define the (relative) flow force flux function $\Phi^{(s)}$ by setting
\begin{equation} \label{fff}
\phis(q,p) = \int_0^p \left( \frac{(\ws_p(q,p'))^2}{h_p(q,p') (H_p(p';s))^2} - \frac{(\ws_q(q,p'))^2}{h_p(q,p')} \right) \, dp'.
\end{equation}
An analog (partial case with $s = s_+(r)$) of this function was recently introduced in \cite{KozLokhWheeler2020}. The same computation as in \cite{KozLokhWheeler2020} gives
\begin{equation} \label{fff:der}
\phis_q = - \ws_q \left( \frac{1 + (\ws_q)^2}{h_p^2} - \frac{1}{H_p^2}\right), \ \ \phis_p = \frac{(\ws_p)^2}{h_p H_p^2} - \frac{(\ws_q)^2}{h_p}.
\end{equation}
A surprising fact about $\phis$ is that it solves a homogeneous elliptic equation as stated in the next proposition.

\begin{proposition}
	There exist functions $b_1, b_2 \in L^{\infty}(S)$ such that
	\begin{equation} \label{fff:main}
	\frac{1+h_q^2}{h_p^2} \phis_{pp} - 2\frac{h_q}{h_p} \phis_{qp} + \phis_{qq} + b_1 \phis_q + b_2 \phis_p = 0 \ \ \text{in} \ \ S.
	\end{equation}
	Furthermore, $\phis$ satisfies the boundary conditions
\begin{subequations}\label{fff:boundary}
	\begin{alignat}{2}
	\phis &= 2(\FF - \sigma(s;r)) - 2 (r - R(s)) \ws(q,1) + (\ws(q,1))^2 &\qquad& \text{for } p=1,\label{fff:top} \\ 
	\phis &= 0&\qquad& \text{for } p=0. \label{fff:bot}
	\end{alignat}
\end{subequations}
	In the irrotational case $b_1,b_2 = 0$ and \eqref{fff:main} is equivalent to the Laplace equation.
\end{proposition}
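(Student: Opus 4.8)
The plan is to verify the three claims — the boundary conditions, the PDE, and the irrotational reduction — largely by direct computation, using the explicit formulas \eqref{fff:der} for $\phis_q$ and $\phis_p$ as the starting point, together with the equation \eqref{ws:main} satisfied by $\ws$. The boundary conditions are the easy part. Condition \eqref{fff:bot} is immediate from the definition \eqref{fff}, since the integral from $0$ to $p$ vanishes at $p=0$. For \eqref{fff:top} I would evaluate \eqref{fff} at $p=1$ and recognize that the integral appearing there is exactly the integral term in the identity
\[
2(\FF - \sigma(s;r)) = 2 (r - R(s)) \ws(q,1) - (\ws(q,1))^2 + \int_0^1 \left( \frac{(\ws_p)^2}{h_p H_p^2} - \frac{(\ws_q)^2}{h_p} \right) \, dp,
\]
which was established just before the statement; solving this identity for the integral gives \eqref{fff:top} directly.

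The substantive part is the interior equation \eqref{fff:main}. Here I would differentiate the expressions in \eqref{fff:der}: compute $\phis_{pp}$ and $\phis_{qp}$ from $\phis_p$, and $\phis_{qq}$ from $\phis_q$, then form the second-order combination $\tfrac{1+h_q^2}{h_p^2}\phis_{pp} - 2\tfrac{h_q}{h_p}\phis_{qp} + \phis_{qq}$. Each of these derivatives produces terms quadratic in the first derivatives of $\ws$ (times derivatives of $h_p$, $h_q$, $H_p$), plus terms involving the second derivatives $\ws_{pp}$, $\ws_{qp}$, $\ws_{qq}$. The point is that the second-derivative terms can be eliminated: the combination $\tfrac{1+h_q^2}{h_p^2}\ws_{pp} - 2\tfrac{h_q}{h_p}\ws_{qp} + \ws_{qq}$ is, by \eqref{ws:main}, equal to a first-order expression $\omega(p)\ws_p - \tfrac{(\ws_q)^2 H_{pp}}{h_p^2} + \tfrac{\ws_p(h_p+H_p)H_{pp}}{h_p^2 H_p^2}$, and after the $\ws$-derivatives in $\phis_{pp}$ etc. are organized, every surviving term will be expressible as a bounded coefficient times either $\phis_q = -\ws_q\bigl(\tfrac{1+(\ws_q)^2}{h_p^2} - \tfrac{1}{H_p^2}\bigr)$ or $\phis_p = \tfrac{(\ws_p)^2}{h_p H_p^2} - \tfrac{(\ws_q)^2}{h_p}$. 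Extracting those factors identifies $b_1$ and $b_2$; their $L^\infty$ bounds then follow from $h \in C^{2,\gamma}(\overline S)$, $\omega \in C^\gamma$, the uniform ellipticity ($h_p$ bounded above and away from zero), and the fact that $H_p$, $H_{pp}$ are bounded on $[0,1]$ for fixed $s > s_0$. Since this is a ``same computation as in \cite{KozLokhWheeler2020}'' for the case $s = s_+(r)$, I would carry it out in parallel with general $s$, noting that $H_p(p;s)$ enters only through smooth bounded quantities, so nothing new is needed beyond bookkeeping.

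The main obstacle, and the only place real care is required, is precisely this algebraic cancellation: one must check that no term survives that is not divisible by $\phis_q$ or $\phis_p$. The natural way to make this transparent is to treat $\phis_q$ and $\phis_p$ as the basic quantities and re-derive \eqref{fff:der} by a computation that keeps track of which combinations of $\ws$-derivatives occur, rather than expanding everything into monomials; the structure of \eqref{ws:main} — that it too has the form ``(elliptic second-order part) $=$ (first-order remainder)'' with the same principal coefficients — is what forces the cancellation. Finally, in the irrotational case $\omega \equiv 0$ one has $\Omega \equiv 0$, hence $s_0 = 0$ and $H_p \equiv 1/s$ is constant, so $H_{pp} = 0$; tracing through the computation, every term that contributed to $b_1$ or $b_2$ carried a factor of $\omega$ or $H_{pp}$, so $b_1 = b_2 = 0$, and the principal part $\tfrac{1+h_q^2}{h_p^2}\phis_{pp} - 2\tfrac{h_q}{h_p}\phis_{qp} + \phis_{qq} = 0$ is the image of the Laplacian under the hodograph change of variables, giving the stated equivalence.
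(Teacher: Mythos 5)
Your treatment of the boundary conditions is correct and complete: \eqref{fff:bot} is immediate from the definition \eqref{fff}, and \eqref{fff:top} follows by solving the displayed identity for $2(\FF-\sigma(s;r))$ for the integral term, which is exactly $\phis(q,1)$. Note, however, that the paper does not actually prove this proposition at all --- it defers entirely to \cite{KozLokhWheeler2020} --- so the substantive question is whether your outline of the interior computation would close, and there it stops short of a proof. You correctly identify the strategy (differentiate \eqref{fff:der}, use \eqref{ws:main} to eliminate the second derivatives of $\ws$ appearing in the principal combination, regroup what remains), and you correctly identify the decisive step as ``check that no term survives that is not divisible by $\phis_q$ or $\phis_p$'' --- but you then assert rather than verify it. Since that cancellation \emph{is} the content of the proposition, the proposal as written is a plan, not a proof.

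There is also a concrete subtlety hidden in the phrase ``extracting those factors identifies $b_1$ and $b_2$.'' Divisibility by $\phis_q$ or $\phis_p$ with an $L^\infty$ quotient is not the same as carrying a factor of $\ws_q$ or $\ws_p$: one has $\phis_q = -\ws_q\bigl(\tfrac{1+h_q^2}{h_p^2}-\tfrac{1}{H_p^2}\bigr)$, and the cofactor $\tfrac{1+h_q^2}{h_p^2}-\tfrac{1}{H_p^2}$ can vanish inside $S$ (it does so wherever $|\nabla\psi|=U_y$), so a generic surviving term proportional to, say, $\ws_q\ws_p$ divided by $\phis_q$ need not be bounded; similarly $\phis_p$ is a specific \emph{difference} of two quadratic expressions, not a common factor. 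The actual computation must show that the leftover first-order terms organize into bounded multiples of precisely these two combinations, which requires exhibiting the grouping explicitly rather than appealing to the structure of \eqref{ws:main}. (A smaller caveat: the formula for $\phis_q$ in \eqref{fff:der} is itself not a consequence of the definition alone --- only $\phis_p$ is --- but since the paper states \eqref{fff:der} before the proposition you are entitled to use it.) Your closing remark on the irrotational case is correct in substance, since then $H_{pp}=\omega H_p^3=0$ and the principal part is the hodograph image of the Laplacian, but its justification inherits the same unverified bookkeeping.
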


For the proof we refer to \cite{KozLokhWheeler2020}. We also
note that $\phis \in C^{2,\gamma}(\overline{S})$, provided $h \in C^{2,\gamma}(\overline{S})$ and $\omega \in C^{\gamma}([0,1])$.

The next proposition explains the meaning of the auxiliary function $\kappa(s; r)$.

\begin{proposition} \label{p:fff}
	Let $h \in  C^{2,\gamma}(\overline{S})$ be a solution to \eqref{height} with $r>R_c$. Assume that the flow force flux function $\phis$ for some $s > s_0$ satisfies $\inf_{q \in \R} \phis(q; 1) \leq 0$. Then
	\[
	\inf_{q \in \R} \phis(q; 1) = \kappa(s; r),
	\]
	where $\kappa(s; r)$ is defined by \eqref{kappa}.
\end{proposition}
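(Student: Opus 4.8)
My plan is to start from the explicit boundary value \eqref{fff:top}. Completing the square in $\ws(q,1)$ and recalling the definition \eqref{kappa} of $\kappa$, one rewrites it as
\[
\phis(q,1) = \kappa(s;r) + \bigl(\ws(q,1) - (r - R(s))\bigr)^{2}.
\]
In particular $\phis(q,1) \ge \kappa(s;r)$ for every $q$, so the inequality $\inf_{q}\phis(q,1) \ge \kappa(s;r)$ comes for free, and the whole content of the proposition is the reverse inequality: one must exhibit points of the top boundary at which $\ws(q,1)$ equals, or tends to, $r - R(s)$.

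For that, put $m = \inf_{q}\phis(q,1) \le 0$. Since $\phis \in C^{2,\gamma}(\overline{S})$ is bounded, solves the homogeneous elliptic equation \eqref{fff:main} (which carries no zeroth-order term), and vanishes on $p=0$ by \eqref{fff:bot}, the maximum principle in the strip \cite{Vitolo2007} gives $\inf_{\overline{S}}\phis = \min\{m,0\} = m$. An interior point where this infimum is attained is excluded by the strong maximum principle, since it would force $\phis \equiv 0$; but then the first-order identities \eqref{fff:der} handle it directly: combining $\phis_{p}\equiv 0$ with $\phis_{q}\equiv 0$ gives, by a short algebraic check (on the set $\ws_q\neq 0$ one gets both $h_p^2=H_p^2(1+\ws_q^2)$ and $h_p=H_p(1+|\ws_q|)$, which are incompatible), $\ws_{q}\equiv 0$, hence $\ws_{p}\equiv 0$, hence $\ws\equiv 0$; so $h = H(\cdot;s)$ is laminar with $R(s)=r$, and $\kappa(s;r) = 2(\FF-\sigma(s;r)) = 0 = m$.

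So suppose the infimum $m$ is attained at a point $(q_{0},1)$ of the top boundary. Then $\phis_{q}(q_{0},1)=0$, while Hopf's boundary-point lemma gives $\phis_{p}(q_{0},1) < 0$ (the alternative that $\phis$ is constant having just been dealt with). Reading off \eqref{fff:der}: the inequality $\phis_{p}(q_{0},1) < 0$ forces $\ws_{q}(q_{0},1) \ne 0$, whereupon $\phis_{q}(q_{0},1)=0$ forces $h_{p}^{2} = H_{p}^{2}\bigl(1+\ws_{q}^{2}\bigr)$ at $(q_{0},1)$. Inserting this identity together with $\ws_{p} = h_{p}-H_{p}$ into the top boundary condition \eqref{ws:top}, the first two terms cancel identically and one is left with $\ws(q_{0},1) = r - R(s)$. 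By the first paragraph this means $\phis(q_{0},1) = \kappa(s;r)$, hence $m = \kappa(s;r)$, as claimed.

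It remains to treat the case in which $m$ is only approached as $q\to\pm\infty$. Choosing $q_{n}$ with $\phis(q_{n},1)\to m$ and invoking the uniform $C^{2,\gamma}(\overline{S})$ bound on $h$, a compactness argument (Arzel\`a--Ascoli) extracts from the translates $h(\cdot+q_{n},\cdot)$ a subsequence converging in $C^{2}_{\mathrm{loc}}(\overline{S})$ to a solution $h_{\infty}$ of \eqref{height} with the same $r$ and --- the flow force being conserved and stable under such limits --- the same $\FF$, hence the same $\sigma$, $R$, $\kappa$. Its flow force flux function $\phi^{(s)}_{\infty}$ satisfies $\phi^{(s)}_{\infty}\ge m$ with $\phi^{(s)}_{\infty}(0,1)=m$, so it attains $m$ at $(0,1)$, and the argument of the previous two paragraphs applied to $h_{\infty}$ (an interior minimum of $\phi^{(s)}_{\infty}$ disposed of by the algebraic check above, a top-boundary minimum via Hopf) yields $m = \kappa(s;r)$. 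The main obstacle I anticipate is this compactness step at infinity together with the bookkeeping of the degenerate laminar configurations; the Hopf-lemma computation in the third paragraph is the real point, but it is only a finite calculation.
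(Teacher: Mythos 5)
Your argument is correct and its core is the same as the paper's: reduce to a minimum point of $\phis$ on $p=1$, apply the Hopf lemma to get $\phis_p(q_0,1)<0$, read off $\ws_q(q_0,1)\neq 0$ from \eqref{fff:der}, and conclude $\ws(q_0,1)=r-R(s)$. Two of your steps are packaged differently, though. First, the identity $\phis(q,1)=\kappa(s;r)+\bigl(\ws(q,1)-(r-R(s))\bigr)^2$ obtained by completing the square in \eqref{fff:top} is a genuine simplification: it gives the inequality $\phis(q,1)\ge\kappa(s;r)$ for free, replaces the paper's differentiation of the boundary condition (the paper gets $\ws=r-R(s)$ from $\phis_q=2\ws_q(\ws-(r-R(s)))=0$ on $p=1$, whereas you get it by substituting $h_p^2=H_p^2(1+(\ws_q)^2)$ into \eqref{ws:top} --- both work), and it also lets you treat explicitly the degenerate case $\phis\equiv 0$, which the paper leaves implicit when invoking Hopf. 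A small slip there: from $\phis_p=0$ you only get $h_p=H_p(1\pm\abs{\ws_q})$, not just the $+$ branch, but both branches are incompatible with $h_p=H_p\sqrt{1+(\ws_q)^2}$ unless $\ws_q=0$, so the conclusion stands. Second, for the infimum attained only at infinity you use a translation-compactness argument, while the paper argues directly along the minimizing sequence $\{q_j\}$, extracting subsequences with $\phis_q(q_j,1)\to 0$ and $\limsup\phis_p(q_j,1)\le 0$ and then splitting on whether $\ws_q(q_j,1)\to 0$ (in which case \eqref{fff:der} forces $\ws_p(q_j,1)\to 0$ and \eqref{ws:top} gives $\ws(q_j,1)\to r-R(s)$) or not. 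The paper's route is lighter and stays entirely with the original solution; your compactness route additionally requires uniform ellipticity and a uniform lower bound $\inf_S h_p>0$ so that the limit $h_\infty$ is a nondegenerate solution with a well-defined flow force flux function, and a justification that $\FF$ passes to the limit (it does, being a $q$-independent integral over $p\in[0,1]$ under $C^1_{\mathrm{loc}}$ convergence). These uniform bounds are not stated in the proposition's hypotheses, so you should either add them or switch to the sequential argument, which needs only the pointwise information at the points $(q_j,1)$.
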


\begin{proof}
	First, we assume that the infimum is attained at some point $(q_0; 1)$, where $\phis_q(q_0; 1) = 0$. Differentiating the boundary condition \eqref{fff:top}, we find
	\begin{equation} \label{p:fff:eq2}
	\phis_q(q_0, 1) = 2\ws_q(q_0, 1) (\ws(q_0, 1) - (r-R(s))) = 0.
	\end{equation}	
	Because $\phis$ attains it's global minimum at $(q_0,1)$, then the maximum principle and the Hopf lemma give $\phis_p(q_0,1) < 0$. In particular, we find that $\ws_q(q_0, 1) \neq 0$ by the second formula \eqref{fff:der}. Thus, we necessarily obtain
	\[
	\ws(q_0, 1) = (r-R(s)).
\]
	Using this equality in \eqref{fff:der}, we conclude $\phis(q_0,1) = \kappa(s;r)$ as required.
	
	Now we assume that the infimum is attained over a sequence $\{q_j\}_{j=1}^\infty$ accumulating at the
	positive infinity. Passing to a subsequence, if necessary, we can assume that
	\begin{equation}\label{p:fff:eq1}
	\lim_{j \to +\infty} \phis_q(q_j,1) = 0, \ \ \lim_{j \to +\infty} \phis_p(q_j,1) \leq 0.
	\end{equation}
	There are two possibilities:
	\[
	 (i) \ \lim_{j \to +\infty} \ws_q(q_j,1) = 0 \ \ \text{and} \ \  (ii) \ \lim_{j \to +\infty} \ws_q(q_j,1) \neq 0. 
	\]	 
	 In the first case relations in \eqref{p:fff:eq1} give
	\[
	 \lim_{j\to+\infty} \ws_q(q_j,1) = \lim_{j\to+\infty} \ws_p(q_j,1) = 0,
	\]
	which then require 
	\[
	\lim_{j\to+\infty} \ws(q_j,1) = r - R(s)
	\]
	by the Bernoulli equation \eqref{ws:top}. In this case $\inf_S \phis = \kappa(s;r)$ as desired. The remaining option (ii) provides with a subsequence $\{q_{j_k}\}$ such that $\lim_{k\to+\infty} \ws(q_{j_k},1) = r - R(s)$, which follows from the first relation in \eqref{p:fff:eq1} and \eqref{p:fff:eq2}. Thus, we find again that $\inf_S \phis = \kappa(s;r)$, which completes the proof.
\end{proof}

\section{Proof of Theorem \ref{thm:main}}

Assume that the vorticity function $\omega$ satisfies condition (ii) of the theorem. In this case $d_0, R_0 < +\infty$, $s_0 = 0$ and 
\begin{equation} \label{condii}
\inf_{s > s_0} H_p(0;s) = +\infty.
\end{equation}

First we prove the claim about solitary waves. Thus, we assume that there exists a solitary wave solution $h$ with $r \geq R_0$. Choosing $s = s_+(r)$, we put
\[
w(q,p) = h(q,p) - H(p;s_+(r)).
\]
It follows from Theorem 1 in \cite{Kozlov2015} that $w(q,1) > 0$ for all $q \in \R$. Now because for a supercritical solitary wave $\FF = \sigma(s_+(r);r)$ and the relation \eqref{fff:top} is then reduced to
\[
\Phi^{(s_+(r))} = (\ws)^2,
\]
we find that $\Phi^{(s_+(r))}$ is strictly positive along the top boundary. On the other hand, we can choose $s \in (s_0,s_+(r))$ sufficiently small so that $\ws_p (q_0,0) = 0$ for some $q_0 \in \R$, which follows from \eqref{condii}. Then the corresponding flow force flux function $\phis$ must attain negative values somewhere along the top boundary, because otherwise $\phis_p(q,0) > 0$ for all $q \in \R$ by the Hopf lemma, leading to a contradiction with $\ws_p (q_0,0) = 0$ in view of the second formula \eqref{fff:der}. Since $\phis$ depends smoothly on $s$, by the continuity we can find $s_\star \in (s_0,s_+(r))$ for which $\inf_{q \in \R} \Phi^{(s_\star)}(q,1) = 0$. By Proposition \ref{p:fff} we obtain $\kappa(s_\star;r) = 0$ so that $\FF > \sigma(s_\star;r)$. Now Lemma \ref{lemma:sigma} gives $\sigma(s_\star;r) > \sigma(s_+(r);r)$ and then $\FF > \sigma(s_+(r);r)$, which can not be true for a supercritical  solitary wave.

Now we consider the case of a Stokes wave $h$ for some $r \geq R_0$. Our aim is to show that $r < R_0 - \Omega(1)$. We start by proving 
\begin{lemma} \label{lemma:stokes}
	There exists $s_\star \in (s_0,s_+(r))$ such that $\FF < \sigma(s_\star;r)$.
\end{lemma}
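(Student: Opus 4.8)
The plan is to mimic the solitary-wave argument but take advantage of periodicity, which makes the flow force flux function genuinely oscillatory rather than decaying. Let $h$ be a Stokes wave with $r \ge R_0$, periodic in $q$. First I would record that, by the same reasoning as in the solitary case, condition \eqref{condii} lets me pick $s \in (s_0,s_+(r))$ small enough that $\ws_p(q_0,0)=0$ at some $q_0$; then the Hopf lemma forces $\inf_{q}\phis(q,1) < 0$ for this $s$ (otherwise $\phis_p(\cdot,0)>0$ everywhere, contradicting $\phis_p(q_0,0)=0$ via the second identity in \eqref{fff:der}). So far this is identical to the solitary argument. The new ingredient I need is an $s$ in the same range for which $\inf_q \phis(q,1) > 0$, because then continuity in $s$ produces the critical $s_\star$ with $\inf_q\phistar(q,1)=0$, and Proposition~\ref{p:fff} gives $\kappa(s_\star;r)=0$, i.e. $2(\FF-\sigma(s_\star;r)) = (r-R(s_\star))^2 > 0$, which is exactly $\FF < \sigma(s_\star;r)$. (I should check $r \neq R(s_\star)$; since $s_\star < s_+(r)$ and, in class (ii), $R$ is strictly monotone below $s_+$, we indeed have $r - R(s_\star) \ne 0$, so the inequality is strict.)

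The key step is therefore to find some $s$ in $(s_0, s_+(r))$ with $\phis(q,1) > 0$ for all $q$. Here I would use that for a Stokes wave $\ws$ is periodic and nonconstant, so it has an interior maximum and minimum on the strip; by the maximum principle applied to the elliptic equation \eqref{ws:main} together with the boundary conditions \eqref{ws:boundary}, $\ws$ attains its extrema on $p=1$. At a crest $q_c$ we have $\ws_q(q_c,1)=0$, and differentiating \eqref{fff:top} as in Proposition~\ref{p:fff} gives $\phis_q(q_c,1)=0$; the Hopf lemma at an interior-strip extremum of $\ws$, or a direct sign analysis of $\phis_p$ from \eqref{fff:der}, pins down the value of $\phis$ at the crest and trough in terms of $\kappa(s;r)$ and $\ws(q,1)^2$. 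The cleanest route: show that if $\inf_q\phis(q,1) \le 0$ then by Proposition~\ref{p:fff} it equals $\kappa(s;r)$, and $\kappa(s;r)$ is \emph{increasing} on $(s_0,s_+(r))$ by Lemma~\ref{lemma:kappa}. So if I can exhibit even one value $s_1 \in (s_0,s_+(r))$ with $\kappa(s_1;r) > 0$, then for that $s_1$ the hypothesis $\inf_q\phis(q,1)\le 0$ is impossible, hence $\phis(\cdot,1)$ is strictly positive there — and I am done by the continuity/continuation argument above.

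Thus the real crux reduces to: $\kappa(s_1;r) > 0$ for some $s_1$ close to $s_+(r)$ from below. Since $\kappa(s;r) = 2(\FF-\sigma(s;r)) - (r-R(s))^2$ and $\sigma(s_+(r);r) = \FF_+(r)$, I expect to evaluate near $s_+(r)$: as $s \to s_+(r)^-$, $(r-R(s))^2 \to 0$, so $\kappa(s_+(r);r) = 2(\FF - \FF_+(r))$. Hence I need $\FF > \FF_+(r)$ for the given Stokes wave — but that is precisely what Lemma~\ref{lemma:stokes} should \emph{not} be assuming, so instead I should argue the other way: run the continuation in $s$ starting from a small $s$ (where $\inf_q\phis(q,1)<0$, shown above) and \emph{increasing} $s$; either $\inf_q\phis(q,1)$ hits $0$ at some $s_\star<s_+(r)$, giving the lemma immediately via $\kappa(s_\star;r)=0$ and $\FF<\sigma(s_\star;r)$; or it stays negative on all of $(s_0,s_+(r))$, in which case Proposition~\ref{p:fff} identifies it with $\kappa(s;r)$ throughout, forcing $\kappa(s;r)<0$ for all such $s$, hence (letting $s\to s_+(r)^-$) $\FF \le \FF_+(r)$, and then a separate compactness/strict-inequality argument for Stokes waves — using that $\ws$ is nonconstant so the integral term in $2(\FF-\sigma)$ cannot vanish identically — upgrades this to the needed strict bound or directly contradicts having a nonconstant Stokes wave. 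The main obstacle I anticipate is exactly this dichotomy's second branch: ruling out that $\phis(\cdot,1)$ stays negative for \emph{all} $s<s_+(r)$, which is where the periodicity of the Stokes wave (interior extrema of $\ws$, Hopf lemma at $p=1$) must be used decisively rather than the decay available in the solitary case.
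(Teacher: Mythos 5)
Your proposal contains a fatal sign error at its very first step, and the strategy cannot be repaired. You write that $\kappa(s_\star;r)=0$ gives $2(\FF-\sigma(s_\star;r))=(r-R(s_\star))^2>0$, ``which is exactly $\FF<\sigma(s_\star;r)$.'' It is exactly the opposite: it gives $\FF>\sigma(s_\star;r)$. This is not incidental — it is precisely why the continuation-in-$s$ argument works for \emph{solitary} waves (there one concludes $\FF>\sigma(s_\star;r)>\sigma(s_+(r);r)=\FF_+(r)$, contradicting the identity $\FF=\FF_+(r)$ that holds for supercritical solitary waves) and why the same argument cannot possibly deliver Lemma~\ref{lemma:stokes}, which asserts the reverse inequality $\FF<\sigma(s_\star;r)$. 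Any $s_\star$ produced by the ``$\inf_q\phis(q,1)=0$'' continuation mechanism, via Proposition~\ref{p:fff}, yields $\FF\ge\sigma(s_\star;r)$, never $\FF<\sigma(s_\star;r)$. Your own closing paragraph shows the symptom of this: the branch of the dichotomy you cannot close ($\kappa(s;r)<0$ for all $s<s_+(r)$, i.e.\ $\FF\le\FF_+(r)$) is in fact the generic situation for a Stokes wave, for which no identity $\FF=\FF_+(r)$ is available, so the contradiction you hope for is not there to be found.

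The paper's proof proceeds by an entirely different, local mechanism. Between an adjacent trough $q_t$ and crest $q_c$, one uses \eqref{condii} to choose $s_\star\in(s_0,s_+(r))$ so that $H_p(0;s_\star)=h_p(q_\star,0)$ at some bottom point $q_\star\in(q_t,q_c)$. The zero level set $\Gamma$ of $\wstar=h-H(\cdot\,;s_\star)$ in the rectangle $(q_t,q_c)\times(0,1)$ is then shown (using monotonicity properties of Stokes waves between trough and crest: $\wstar_q>0$ inside, sign conditions on $\wstar_{qq}$ on the vertical sides and $\wstar_{qp}$ on the bottom) to be a graph over $p$ joining $(q_\star,0)$ to a point $(q_\dagger,1)$ with $\wstar(q_\dagger,1)=0$. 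Because $\wstar(q_\dagger,1)=0$, the boundary condition \eqref{fff:top} collapses to $\phistar(q_\dagger,1)=2(\FF-\sigma(s_\star;r))$, so it suffices to show $\phistar(q_\dagger,1)<0$. This is done by deforming the vertical integration contour onto $\Gamma$ and checking that $(\phistar_p,-\phistar_q)\cdot(\wstar_q,\wstar_p)$ is pointwise negative along $\Gamma$ — a direct computation from \eqref{fff:der}. No continuation in $s$, no use of Proposition~\ref{p:fff}, and no positivity of $\phis(\cdot,1)$ for any $s$ is needed. If you want to salvage your write-up, you must discard the continuation scheme for this lemma and look instead for a pointwise evaluation of $\phistar$ at a point of the top boundary where $\wstar$ vanishes.
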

\begin{proof}
	
	Let $q_t < q_c$	be coordinates for some adjacent trough and crest respectively, so that $h(q,1)$ is monotonically increasing on the interval $(q_t,q_c)$. By \eqref{condii} we can choose a stream solution $H(p;s_\star)$ with $s_\star \in (s_0,s_+(r))$ such that $h_p(q_\star,0) = H_p(0;s_\star)$ for some $q_\star \in (q_t,q_c)$. For the function 
	\[
	w^{(\star)}(q,p) = h(q,p) - H(p;s_\star)
	\]
	we consider the zero level set
	\[
	\Gamma = \{ (q,p) \in (q_t,q_c) \times (0,1): \ \ w^{(\star)}(q,p) = 0 \}
	\]
	inside the rectangle $Q = (q_t,q_c) \times (0,1)$. 	We claim that $\Gamma$ is a graph $\{(f(p),p), \ \ p \in (0,1) \}$ of some function $f \in C^{2,\gamma}([0,1])$ such that $f(0) = q_\star$ and $f(1) \in (q_t,q_c)$. Thus, the curve $\Gamma$ connects a point on the bottom with the surface. To explain this fact we need to recall some properties of Stokes waves. Let $Q_l,Q_r,Q_t$ and $Q_b$ be the left, right, top and bottom boundaries of $Q$, excluding corner points. Then the following properties are true:
	\begin{itemize}
		\item[(a)] $w^{(\star)}_q > 0$ on $Q$, while $w^{(\star)}_q = 0$ on $Q_l,Q_r$ and $Q_b$;
		\item[(b)] $w^{(\star)}_{qq} > 0$ on $Q_l$;
		\item[(c)] $w^{(\star)}_{qq} < 0$ on $Q_r$;
		\item[(d)] $w^{(\star)}_{qp} > 0$ on $Q_b$.
	\end{itemize}
	First of all, (a) guarantees that $\Gamma$ (if not empty) is locally the graph of a function as desired. We only need to show that it connects $Q_t$ and $Q_b$. Note that $w^{(\star)}$ attains a unique zero value at some point $(q_\dagger,1)$ on $Q_t$. Otherwise, we would find that $w^{(\star)}_p(q,0)$ has a constant sign by the Hopf lemma, contradicting to the equality $w^{(\star)}_p(q_\star,0) = 0$. Thus, $\Gamma$ bifurcates locally from $(q_\dagger,1)$ inside $Q$. On the other hand, (d) shows that $\Gamma$ also bifurcates inside $Q$ from $(q_\star,0)$ on the bottom. Now it is easy to see that theses two curves must be connected with each other. Indeed, relations (b) and (c) and inequalities $w^{(\star)}_p(q_t,0) < 0 < w^{(\star)}_p(q_c,0)$ guarantee that $w^{(\star)}_p$ has constant sign on the vertical sides $Q_l$ and $Q_r$. In particular, $w^{(\star)}$ is strictly negative on $Q_l$ and positive on $Q_r$. Thus, $\Gamma$ can not approach sides $Q_l$ and $Q_r$ and must connect $Q_t$ and $Q_b$ as desired.
	
	Now we can prove that $\Phi^{(s_\star)}(q_\dagger,1) < 0$ and then $\FF < \sigma(s_\star;r)$ by \eqref{fff:top}, since $w^{(\star)}(q_\dagger,1) = 0$. For that purpose we compute $\Phi^{(s_\star)}(q_\dagger,1)$ by changing a contour of integration as follows:
	\[
	\Phi^{(s_\star)}(q_\dagger,1) = \int_0^1 \Phi^{(s_\star)}_p(q_\dagger,p) \, dp = \int_{\Gamma} (\Phi^{(s_\star)}_p,-\Phi^{(s_\star)}_q) \cdot \boldsymbol{n} \, \textrm{dl},
	\]
	where $\textrm{dl}$ is the length element and $\boldsymbol{n} = (n_1,n_2)$ is the unit normal to $\Gamma$ with $n_1 > 0$ (because $\Gamma$ is the graph of $f(p)$). Note that $\boldsymbol{n}$ is proportional with $(w^{(\star)}_q,w^{(\star)}_p)$ along $\Gamma$ and is oriented in the same way. Therefore, $(\Phi^{(s_\star)}_p,-\Phi^{(s_\star)}_q) \cdot \boldsymbol{n}$ has the same sign as
	\begin{equation} \label{fff:field}
	(\Phi^{(s_\star)}_p,-\Phi^{(s_\star)}_q) \cdot (w^{(\star)}_q,w^{(\star)}_p) = - \left( \frac{(w^{(\star)}_p)^2}{h_p^2 H_p} + \frac{(w^{(\star)}_q)^2 H_p}{h_p^2} \right) w^{(\star)}_q < 0,
	\end{equation}
	which is a matter of a straightforward computation based on \eqref{fff:der}. To see that we first rewrite $\phis_q$ as
	\[
	\phistar_q =  \wstar_q \left( \frac{\phistar_p}{h_p} + \frac{2 \wstar_p}{h_p^2 H_p}  \right).
	\]
	Using this formula we compute
	\[
	\begin{split}
	(\Phi^{(s_\star)}_p,-\Phi^{(s_\star)}_q) \cdot (w^{(\star)}_q,w^{(\star)}_p) & = \phistar_p \wstar_q - \wstar_q \wstar_p \left( \frac{\phistar_p}{h_p} + \frac{2 \wstar_p}{h_p^2 H_p}  \right) \\
	& = \wstar_q \left( \frac{H_p \phistar_p}{h_p} - \frac{2(\wstar_p)^2}{h_p^2 H_p} \right) 
	\end{split}
	\]
	It is left to use formula \eqref{fff:der} for $\phistar_p$ to conclude \eqref{fff:field}. Thus, $(\Phi^{(s_\star)}_p,-\Phi^{(s_\star)}_q) \cdot \boldsymbol{n}$ is negative along $\Gamma$ and then $\Phi^{(s_\star)}(q_\dagger,1) < 0$. The lemma is proved.
\end{proof}

Using Lemma \ref{lemma:stokes} it is easy to complete the proof of the theorem. Indeed, for all $s \in (s_0,s_\star)$ we have $\FF < \sigma(s_\star;r)$, while at the every crest we have $\phis(q_c,1) > 0$, because of \eqref{fff} and that $\ws_q(q_c,p) = 0$ for all $p \in [0,1]$. Thus, the boundary condition \eqref{fff:top} then implies
\[
\ws(q_c,1) > 2 (r-R(s)),
\]
which is true for all $s \in (s_0,s_\star)$. Here we used the fact that $\ws(q_c,1) > 0$, which was proved in \cite{Kozlov2015}. Passing to the limit $s \to s_0$, we find
\[
\eta(q_c) > d_0 + 2 (r - R_0).
\]
Finally, because $\eta(q_c) < r$ by \eqref{height:top} and $R_0 = d_0 - \Omega(1)$, we obtain
\[
r < R_0 - \Omega(1),
\]
which finises the proof of the theorem.

\bibliographystyle{alpha}
\bibliography{bibliography}
\end{document}